\newcommand\RR{\mathbb{R}}
\newcommand\SPS{\mathsf{SPS}}
\newcommand\mSPS{\mathsf{mSPS}}
\DeclareMathOperator{\PER}{PER}
\DeclareMathOperator{\PW}{PW}
\begin{document}

\title{The Limited Power of Powering:\\ 
Polynomial Identity Testing and 
a 
Depth-four Lower Bound for the Permanent}
\author{Bruno Grenet\inst{1}\inst{3}, Pascal Koiran\inst{1}\inst{3}, Natacha Portier\inst{1}\inst{3}\thanks{This material is based on work supported in part by the European Community under contract PIOF-GA-2009-236197 of the 7th PCRD.}, Yann Strozecki\inst{2}\inst{3}}

\institute{
    LIP, UMR 5668, \'ENS de Lyon -- CNRS -- UCBL -- INRIA\\
    \'Ecole Normale Sup\'erieure de Lyon, Universit\'e de Lyon\\
    \url{[Bruno.Grenet,Pascal.Koiran,Natacha.Portier]@ens-lyon.fr}
\and
    \'Equipe de Logique Math\'ematique, Universit\'e Paris VII\\
    \url{Strozecki@logique.jussieu.fr}
\and
    Department of Computer Science, University of Toronto
}

\titlerunning{The Limited Power of Powering}
\authorrunning{B. Grenet, P. Koiran, N. Portier, Y. Strozecki}

\maketitle 

\begin{abstract}
Polynomial identity testing and arithmetic 
circuit 
lower 
bounds are two central
questions in algebraic complexity theory. It is an intriguing fact that these 
questions are actually related. 
One of the authors of the present paper has recently proposed 
a ``real $\tau$-conjecture'' which is inspired by this connection.
The real $\tau$-conjecture states that the number of real roots of 
a sum of products of sparse univariate polynomials should be 
polynomially bounded. It implies a superpolynomial lower bound on the
size of arithmetic circuits computing the permanent polynomial.

In this paper we show that the real-$\tau$ conjecture holds
 true for a restricted
class of sums of products of sparse polynomials.
This result yields lower bounds for a restricted class of depth-4 circuits:
we show that polynomial size circuits from this class cannot compute the
permanent, and we also give a deterministic polynomial identity testing
algorithm for the same class of circuits.

\end{abstract}

\section{Introduction}

The $\tau$-conjecture~\cite{ShubSmale,smale1998mathematical} 
states that a 
univariate
polynomial with integer coefficients defined 
by an arithmetic circuit 
has a number of integer roots polynomial in the size of the circuit.
A real version of this conjecture was recently 
presented in~\cite{koiran2011shallow}.
The real $\tau$-conjecture states that the number of real roots of 
a sum of products of sparse univariate polynomials should be 
polynomially bounded as a function of the size of the corresponding 
expression.
More precisely, consider a polynomial of the form
$$f(X)=\sum_{i=1}^k \prod_{j=1}^m f_{ij}(X),$$
where $f_{ij} \in \RR[X]$ has at most $t$ monomials.
The conjecture asserts that the number of real roots of $f$ is
bounded by a polynomial function of $kmt$.
It was shown in~\cite{koiran2011shallow} that this conjecture implies 
a superpolynomial lower bound on the arithmetic circuit complexity 
of the permanent polynomial (a central goal of algebraic complexity theory
ever since Valiant's seminal work~\cite{valiant1979completeness}).
In this paper we show that the conjecture holds true in a special case.
We focus on the case where the number of distinct sparse polynomials is small
(but each polynomial may  be repeated many times).
We therefore consider expressions of the form
\begin{equation} \label{expression}
\sum_{i=1}^k \prod_{j=1}^m f_{j}^{\alpha_{ij}}(X).
\end{equation}
We obtain a $O(t^{m(2^{k-1}-1)})$ upper bound on 
the number of real roots of such a polynomial, 
where $t$ is the maximum number of monomials
in the~$f_j$. In particular, the bound is polynomial in $t$ when the ``top fan-in'' $k$ and the number $m$ of sparse polynomials in the expression are
both constant. Note also that the bound is independent of the magnitude of
the integers~$\alpha_{ij}$.

From this upper bound we obtain a lower bound on the complexity 
of the permanent for a restricted class of arithmetic circuits.
The circuits that we consider are again of form~(\ref{expression}), but
now $X$ should be interpreted as the tuple of inputs to the circuit
rather than as a single real variable. Roughly speaking, we show 
a superpolynomial lower bound on the complexity of the permanent 
in the case where $k$ and $m$ are again fixed.
Note that this is a lower bound for a restricted class of depth-$4$ circuits:
the output gate at depth $4$ has fan-in bounded by the constant $k$, 
and the gates at depth $2$ are only allowed to compute a constant
($m$) number of distinct 
polynomials $f_j$.

Our third main result is a deterministic identity testing algorithm, again for
polynomials of the same form. When $k$ and $m$ are fixed, 
we can test if the polynomial in~(\ref{expression}) is identically equal to 0 
in time polynomial in $t$ and in $\max_{ij} \alpha_{ij}$.
Note that if $k$, $m$ {\em and the exponents~$\alpha_{ij}$} are all bounded 
by a constant then the number of monomials in such a polynomial
is $t^{O(1)}$ and our three main results become trivial.
These results are therefore interesting only in the case where the $\alpha_{ij}$ may be large, and can be interpreted as limits on the power of powering.

\subsection{Connection to Previous Work} \label{connections}

The idea of deriving lower bounds on arithmetic circuit complexity
from upper bounds on the number of real roots goes back at least
to a 1976 paper by Borodin and Cook~\cite{BC76}. Their results 
were independently improved by Grigoriev and Risler 
(see~\cite{BCS}, chapter~12). For a long time, it seemed that the lower bounds
that can be obtained by this method had to be rather small since the number
of real roots of a polynomial can be exponential in its arithmetic circuit
size. Nevertheless, as explained 
above it was recently shown in~\cite{koiran2011shallow} that superpolynomial
lower bounds on the complexity of the permanent on general arithmetic circuits
can be derived from a suitable upper bound on the number of roots of
sums of products of sparse polynomials.
This is related to the fact that for low degree polynomials,
arithmetic circuits of depth 4 are almost equivalent to general
arithmetic circuits \cite{agrawal2008arithmetic,koiran2010arithmetic}.

The study of polynomial identity testing (PIT) also has a long history.
The Schwartz-Zippel lemma~\cite{Schw}
 yields a randomized algorithm for PIT. 
 
A connection between deterministic PIT and arithmetic circuit lower bounds 
was pointed out as early as 1980 by Heintz and Schnorr~\cite{HS82}, 
but a more in-depth study of this connection began 
only much later~\cite{kabanets2004derandomizing}.
The recent literature contains deterministic PIT algorithms for various
restricted models (see e.g. the two surveys~\cite{AS09,Sax09}).
One model which is similar to ours was recently studied 
in~\cite{beecken2011algebraic}.
It follows from Theorem~1 in~\cite{beecken2011algebraic} that there is 
a polynomial time deterministic black-box PIT algorithm for polynomials 
of the form~(\ref{expression}) if, 
instead of bounding $k$ and $m$ as in our algorithm, we bound
the transcendence degree $r$ of the polynomials $f_j$. 
Obviously we have $r \leq m$, so from this point of view their result is more 
general.\footnote{As pointed out by the authors of~\cite{beecken2011algebraic}, their result already seems nontrivial for a constant~$m$.}
On the other hand their running time is polynomial in the degree
of the $f_j$, whereas we can handle polynomials of exponential degree
in polynomial time. Note also that~\cite{beecken2011algebraic} does not 
provide any lower bound result.

\subsection{Our approach} \label{approach}

The proof of 
our bound on the number of real roots has the same high-level structure as that
of Descartes' rule of signs. 
\begin{proposition} \label{descartes}
A univariate polynomial $f \in \RR[X]$ 
with $t \geq 1$ monomials has at most $t-1$ positive
real roots.  
\end{proposition}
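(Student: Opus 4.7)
The plan is to prove Proposition~\ref{descartes} by induction on the number $t$ of monomials, using Rolle's theorem as the main tool. This is the classical strategy behind Descartes' rule of signs, and the authors foreshadow in Section~\ref{approach} that their more general bound will mimic the same high-level structure.

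First I would handle the base case $t=1$: a single monomial $aX^n$ with $a\neq 0$ has no positive real root, and $t-1=0$, so the statement holds trivially.

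For the inductive step, suppose the bound is established for every polynomial with at most $t-1$ monomials, and let $f$ have exactly $t\geq 2$ monomials. My first move is to factor out the lowest power of $X$ appearing in $f$, writing $f(X)=X^{n_1} g(X)$ where $g$ has the same $t$ monomials as $f$ but a nonzero constant term. Since $X^{n_1}$ has no positive root, $f$ and $g$ have exactly the same positive roots, and it suffices to bound those of $g$. The key observation is then that $g'(X)$ still has exactly $t-1$ monomials, because differentiation kills the nonzero constant term of $g$ and keeps all other monomials (with nonzero coefficients, since differentiating a monomial of positive degree preserves nonvanishing of the coefficient).

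Now I would invoke Rolle's theorem: if $g$ has $r$ distinct positive real roots $0<x_1<\cdots<x_r$, then between any two consecutive ones there lies a positive root of $g'$, so $g'$ has at least $r-1$ positive real roots. The induction hypothesis applied to $g'$ (which has $t-1$ monomials) gives at most $(t-1)-1=t-2$ positive real roots, hence $r-1\leq t-2$ and $r\leq t-1$, as required. The main, and really only, point to be careful about is the book-keeping at the boundary between ``positive roots counted with multiplicity'' versus ``distinct positive roots''; Rolle's theorem applied with multiplicities gives exactly the clean inequality needed, so I would state the induction hypothesis for roots counted with multiplicity to make the step go through uniformly.
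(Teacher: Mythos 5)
Your proof is correct and follows essentially the same route as the paper: induction on $t$, normalizing so that the constant term is nonzero, differentiating to drop to $t-1$ monomials, and invoking Rolle's theorem between consecutive positive roots. Your added remark about counting with multiplicity is a sound refinement, but it is not needed for the statement as given, and the paper's proof works with distinct roots throughout.
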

The number of negative roots of $f$ is also bounded by $t-1$ 
(consider $f(-X)$),
hence there are at most $2t-1$ real roots (including 0).
There is also a refined version of Proposition~\ref{descartes} where 
the number of monomials $t$ is replaced by the number of sign changes in 
the sequence of coefficients of $f$. The cruder version will be sufficient
for our purposes.

We briefly recall an inductive proof of Proposition~\ref{descartes}.
For $t=1$, there is no non-zero root.
For $t>1$, let $a_{\alpha}X^{\alpha}$ be the monomial of lowest degree.
We can assume that $\alpha=0$ (if not, we can divide $f$ by $X^{\alpha}$ since this operation does not change the number of positive roots). 
Consider now the derivative~$f'$.
It has $t-1$ monomials, and at most $t-2$ positive real roots by induction
hypothesis.
Moreover, by Rolle's theorem there is a positive root of $f'$  
between 2 consecutive positive roots of $f$. We conclude that $f$ has
at most $(t-2)+1=t-1$ positive roots.

In~(\ref{expression}) we have a sum of $k$ terms instead of $t$ monomials, but
the basic strategy remains the same: we divide by the first term and
take the derivative. This has the effect of removing a term, but it also
has the effect (unlike Descartes' rule) of increasing the complexity of
the remaining $k-1$ terms. This results in a larger bound (and a longer proof).

From this upper bound we obtain our permanent lower bound by applying the proof
method which was put forward in~\cite{koiran2011shallow}.
More precisely, assume that the permanent has an
efficient representation of the form~(\ref{expression}).
We show that the same must be true for the univariate polynomial 
$\prod_{i=1}^{2^n} (X-i)$ using a result of 
B\"urgisser~\cite{burgisser2009defining}.
This yields a contradiction with our upper bound on the number of real roots.

Our third result is a polynomial identity testing algorithm.
Using a standard substitution technique, we can assume that the polynomials
$f_j$ in~(\ref{expression}) are univariate.
We note that the resulting $f_j$ may be of exponential degree even if 
the original multivariate $f_j$ are of low degree.
The construction of hitting sets is is a classical approach to deterministic
identity testing.
Recall that a hitting set for a class $\cal F$ of polynomials
is a set of points $H$ such that for any non-identically zero polynomial
$f \in \cal F$ we have a point $x \in H$ such that $f(x) \neq 0$.
Clearly, a hitting set yields a black-box identity testing algorithm
(it is not hard to see that the converse is also true).
Moreover, for any class $\cal F$ of univariate polynomials, 
an upper bound $z(\cal F)$ on the number of real roots of each non-zero 
polynomial in $\cal F$ yields a hitting set (any set of $z({\cal F})+1$ 
real numbers will do).
From our upper bound result we therefore have polynomial size hitting sets
for polynomials of the form~(\ref{expression}) when $k$ and $m$ are fixed.
Unfortunately, the resulting black-box algorithm does not run in polynomial
time: evaluating a polynomial at a point of the hitting set may not be
feasible in polynomial time since (as explained above) the $f_j$ may be
of very high degree. 
We therefore use a different strategy. Roughly speaking, we ``run'' the
proof of our upper bound theorem on an input of form~(\ref{expression}).
This requires explicit knowledge of this representation, and the resulting
algorithm is non-black-box.
As explained in Section~\ref{connections},  
for the case where the $f_j$ are low-degree multivariate polynomials
an efficient black-box algorithm was recently given in~\cite{beecken2011algebraic}.

\paragraph{Organization of the paper.}

In  Section~\ref{sec:bound} we prove an upper bound on the number of real
roots of polynomials of the form~\eqref{expression}, see Theorems~\ref{thm:naive} and~\ref{thm:support}. 
In fact, we obtain an upper bound for a more general class of polynomials
which we call $\SPS(k,m,t,h)$. This generalization is needed for the
inductive proof to go through.
From this upper bound, we derive in Section \ref{sec:lower}
a lower bound on the computational power of (multivariate) circuits 
of the same form.
We give  in Section \ref{sec:PIT} 
a deterministic identity testing algorithm,
again for polynomials of form~\eqref{expression}.

\section{The real roots of 
a
sum of 
products of sparse polynomials} \label{sec:bound}

\subsection{Definitions}\label{sec:def}

In this section, we define precisely the polynomials we are working with.
We then explain how to transform those polynomials in a way which 
reduces the number of terms 
but does not increase too much the number of roots.
This method has some similarities with the proof of Lemma $2$ in~\cite{li2003counting} and it leads
to a bound on the number of roots of the polynomials we study.

We say that a polynomial is \emph{$t$-sparse} if it has at most $t$ monomials.

\begin{definition} \label{SPS}
Let $\SPS(k,m,t,h)$ denote the class of polynomials $\phi\in\RR[X]$ defined by
\[\phi(X) = \sum_{i=1}^k g_i(X)\prod_{j=1}^m f_j^{\alpha_{ij}}(X)\]
where
\begin{itemize}
\item $g_1,\dots,g_k$ are $h$-sparse polynomials over $\RR$;
\item $f_1,\dots,f_m$ are $t$-sparse non-zero polynomials over $\RR$;
\item $\alpha_{11},\dots,\alpha_{km}$ are non-negative integers.
\end{itemize}
We define $P_i=\prod_{j=1}^m f_j^{\alpha_{ij}}$ and $T_i=g_iP_i$ for all $i$.
We also define $\pi = \prod_{j=1}^m f_j$.
Finally, we define $\SPS(k,m,t)$ as the subclass of $\SPS(k,m,t,h)$ 
in which all the $g_i$ are equal to the constant 1. 
\end{definition}
Note that $\SPS(k,m,t)$ is just the class of polynomials of 
form~\eqref{expression}, and is included in $\SPS(k,m,t,1)$.
We want to give a bound for the number of real roots of the polynomials in 
this class,
and more generally in $\SPS(k,m,t,h)$.
To this end, from a polynomial $\phi\in\SPS(k,m,t,h)$, we build a new polynomial $\tilde\phi\in\SPS(k-1,m,t,\tilde h)$ 
for some $\tilde h$ such that a bound on the number of real roots of $\tilde\phi$ yields a bound for $\phi$.

\begin{lemma}\label{tildeIsSps}
Let $\phi\in\SPS(k,m,t,h)$. If $g_1$ is not identically zero, we write $\tilde{\phi}=g_1T_1\pi(\phi/T_1)'$
otherwise $\tilde{\phi}=\phi$.
There exists $\tilde h$ such that
$\tilde{\phi}\in\SPS(k-1,m,t,\tilde h).$
\end{lemma}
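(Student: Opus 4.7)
The plan is to verify the formula $\tilde\phi=g_1T_1\pi(\phi/T_1)'$ directly by expanding the right-hand side and regrouping, after disposing of the trivial case. If $g_1\equiv 0$, then $\phi=\sum_{i=2}^k g_iP_i$ already lies in $\SPS(k-1,m,t,h)$, so we may take $\tilde h=h$. Assume henceforth $g_1\not\equiv 0$.

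The core observation is a logarithmic-derivative computation. Writing $\phi/T_1=1+\sum_{i=2}^{k}u_i$ with $u_i:=T_i/T_1$, I get $(\phi/T_1)'=\sum_{i=2}^{k}u_i'$ and
\[
\frac{u_i'}{u_i}=\frac{g_i'}{g_i}-\frac{g_1'}{g_1}+\sum_{j=1}^{m}(\alpha_{ij}-\alpha_{1j})\frac{f_j'}{f_j}.
\]
Multiplying by $g_1T_1\pi$ and using $T_1u_i=T_i=g_iP_i$ together with $\pi/f_j=\prod_{l\neq j}f_l$ (which is a polynomial because each $f_j$ appears in $\pi$ with exponent exactly one—this is the whole point of choosing the multiplier $\pi$), every denominator is killed and I obtain
\[
g_1T_1\pi\,u_i'=P_i\Bigl\{\pi\,(g_1g_i'-g_1'g_i)+g_1g_i\sum_{j=1}^{m}(\alpha_{ij}-\alpha_{1j})\,f_j'\prod_{l\neq j}f_l\Bigr\}.
\]
Defining $\tilde g_{i-1}$ to be the bracketed polynomial and $\beta_{i-1,j}:=\alpha_{ij}$ for $i=2,\dots,k$, summation over $i$ yields $\tilde\phi=\sum_{i'=1}^{k-1}\tilde g_{i'}\prod_{j=1}^{m}f_j^{\beta_{i'j}}$, exactly the required $\SPS(k-1,m,t,\tilde h)$ shape with the same base polynomials $f_j$.

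It remains to bound the sparsity of $\tilde g_{i-1}$. The factor $\pi=\prod_jf_j$ has at most $t^m$ monomials; each product $g_1g_i'$, $g_1'g_i$, $g_1g_i$ has at most $h^2$; each $f_j'\prod_{l\neq j}f_l$ has at most $t\cdot t^{m-1}=t^m$; so the first summand is $2h^2t^m$-sparse and the second is $h^2mt^m$-sparse. Thus $\tilde h:=h^2 t^m(m+2)$ works (any convenient explicit bound suffices for the inductive use later).

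The only delicate point is recognising that after clearing denominators by $g_1T_1\pi$ the $m+1$ contributions arising from a fixed index $i$ all carry $P_i$ as a common factor—this is what reduces $m+1$ potential product terms per $i$ to a single one and keeps the top fan-in at $k-1$. Once $P_i$ is factored out, the rest is bookkeeping on sparsity, and the bound on $\tilde h$ is independent of the exponents $\alpha_{ij}$, as required.
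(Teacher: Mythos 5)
Your proof is correct and follows essentially the same route as the paper: you compute $(\phi/T_1)'$, clear denominators by multiplying with $g_1T_1\pi$, and observe that $P_i$ factors out of all contributions from a fixed $i$, keeping the top fan-in at $k-1$. The only cosmetic difference is that you organise the differentiation via the logarithmic derivative of $u_i=T_i/T_1$ rather than applying the quotient and product rules directly, which is a tidy repackaging of the same computation; your explicit bound $\tilde h\le (m+2)t^mh^2$ is exactly what the paper uses later.
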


\begin{proof}
If $g_1$ is identically zero, the theorem holds with $\tilde{h} = h$.
Assume now that $g_1$ is not identically zero and let 
\begin{equation*}\psi(X) = \phi(X)/T_1(X) = 1 + \frac{1}{T_1(X)}\cdot\sum_{i=2}^k T_i(X).\end{equation*}

Then 
\[\psi'=\frac{\sum_{i=2}^k \left( T_1T_i' - T_1'T_i\right)}{T_1^2}.\]
Notice that $T_i'=g_i'P_i + g_iP_i'$ and
\[P_i'=\sum_{j=1}^m \alpha_{ij}f_j' f_j^{\alpha_{ij}-1}\cdot\prod_{l\neq j} f_{l}^{\alpha_{il}}=P_i\cdot\sum_{j=1}^m \alpha_{ij} f'_j/f_j.\]
Therefore
\begin{align*}
\psi'&=\frac{1}{T_1^2}\cdot\sum_{i=2}^k (g_1P_1g_i'P_i+g_1P_1g_iP_i'-g_1'P_1g_iP_i-g_1P_1'g_iP_i)\\
&= \frac{1}{T_1^2}\cdot\sum_{i=2}^k(g_1g_i'P_1P_i+g_1g_iP_1P_i\sum_j \alpha_{ij} f_j'/f_j \\
& \qquad\qquad\qquad - g_1'g_iP_1P_i-g_1g_iP_1P_i\sum_j\alpha_{1j}f_j'/f_j)\\
&= \frac{1}{g_1T_1}\cdot\sum_{i=2}^k P_i\left(g_1g_i'-g_1'g_i + g_1g_i\sum_j(\alpha_{ij}-\alpha_{1j})f_j'/f_j\right).
\end{align*}

We now multiply $\psi'$ by $\pi=\prod_j f_j$ and get 
\begin{equation*}
\pi\psi' = \frac{1}{g_1T_1}\cdot\sum_{i=2}^k P_i\left(\pi\cdot(g_1g_i'-g_1'g_i)+g_1g_i\sum_j(\alpha_{ij}-\alpha_{1j})f_j'\prod_{l\neq j} f_l\right).
\end{equation*}

Thus $g_1T_1\pi\psi'$ is a polynomial of the class $\SPS(k-1,m,t,\tilde h)$ for some $\tilde h$. Let us write 
\[\tilde\phi = g_1T_1\pi\psi' = \sum_{i=2}^k P_i\tilde g_i.\]
The integer $\tilde h$ denotes the maximum number of monomials in $\tilde g_i$ for $2\le i\le k$. 
\qed\end{proof} 

\begin{definition}\label{def:phin}
Let $(\phi_n)_{1\le n\le k}$ be the sequence defined by $\phi_1=\phi$ and for $n\ge 1$, $\phi_{n+1}=\tilde\phi_n$. Let also, for $1\le i\le k$, $(g_i^{(n)})_{1\le n \le i}$ be defined by $g^{(1)}_i=g_i$ and $g^{(n+1)}_i=\widetilde{g^{(n)}_i}$ for $i>n$. In other words
\[\phi_n=\sum_{i=n}^k g^{(n)}_i\prod_{j=1}^m f_j^{\alpha_{ij}}.\]
We also define the sequence $(h_n)_{1\le n\le k}$ by $h_1=1$ and $h_{n+1}=\tilde h_n$. That is, each $g^{(n)}_i$ is $h_n$-sparse.
\end{definition}

\subsection{A generalization of Descartes' rule} \label{sec:first}

In Definition~\ref{def:phin} we defined 
a sequence of polynomials $(\phi_n)$ and  a sequence of integers $(h_n)$.
In this section we first prove that the number of real roots of $\phi_n$ 
is bounded by the number of real roots of $\phi_{n+1}$ up to a multiplicative
 constant.
Then, we give an upper bound on $h_n$
and we combine these ingredients to obtain a bound on the number of
real roots of a polynomial in $\SPS(k,m,t)$.
This bound (in Theorem~\ref{thm:naive} at the end of the section)
is polynomial in $t$.

We denote by $r(P)$ the number of distinct real roots 
of a rational function~$P$.
In order to obtain a bound on $r(\phi)$ from a bound on $r(\tilde\phi)$,
we need the following lemma.

\begin{lemma}\label{BoundSPS1}
Let $P\in\SPS(1,m,t,h)$. If $P$ is not identically zero then 
\[r(P)\le 2h+2m(t-1)-1.\]
\end{lemma}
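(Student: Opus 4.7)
The plan is to reduce the problem to a two-sided application of Descartes' rule on each sparse factor of $P$, with careful accounting so that the root at $0$ is counted only once across all factors.

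First I would unpack the definition: a polynomial $P \in \SPS(1,m,t,h)$ has the form $P = g\prod_{j=1}^m f_j^{\alpha_j}$, where $g$ is $h$-sparse, each $f_j$ is $t$-sparse and non-zero, and the $\alpha_j$ are non-negative integers. Because the $f_j$ are non-zero and $P \not\equiv 0$, the factor $g$ is also non-zero. The set of distinct real roots of $P$ is exactly the union of $\{0\}$ (included iff $0$ is a root), the non-zero real roots of $g$, and the non-zero real roots of each $f_j$ for which $\alpha_j > 0$.

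Second, I would record the following elementary consequence of Descartes' rule: for any non-zero $s$-sparse polynomial $q \in \RR[X]$, the number of non-zero real roots of $q$ is at most $2(s-1)$. Indeed, writing $q(X) = X^{\gamma}\hat{q}(X)$ with $\hat{q}(0) \neq 0$, the polynomial $\hat{q}$ remains $s$-sparse, so Proposition~\ref{descartes} bounds its number of positive real roots by $s-1$; the same proposition applied to $\hat{q}(-X)$ bounds the number of negative real roots by $s-1$.

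Finally, combining the two previous steps yields
\[r(P) \;\le\; 1 + 2(h-1) + m \cdot 2(t-1) \;=\; 2h + 2m(t-1) - 1,\]
where the initial $1$ accounts for the root $0$ counted a single time. There is no real obstacle here beyond this small bookkeeping point; the entire argument is a direct sparse-factor-by-sparse-factor application of Descartes' rule, and it is precisely the refusal to count the origin $m+1$ separate times that produces the sharper bound $2m(t-1)$ rather than $m(2t-1)$.
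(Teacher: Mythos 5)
Your proof is correct and follows essentially the same route as the paper's: bound the nonzero real roots of $g$ by $2(h-1)$ and of each $f_j$ by $2(t-1)$ via Descartes' rule, add $1$ for a possible root at the origin, and sum to get $2h+2m(t-1)-1$. Your write-up is a bit more explicit than the paper's in noting that $g\not\equiv 0$ and in spelling out the root-set decomposition, but the argument is identical.
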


\begin{proof}
By definition, $P=g\cdot\prod_j f_j^{\alpha_{j}}$. 
The number of non-zero real roots of $P$ is therefore bounded by the sum
of the number of non-zero real roots of $g$ and of the $f_j$'s.
Since $g$ is $h$ sparse, we know from Descartes' rule that is has at most
$2(h-1)$ non-zero real roots. Likewise, each $f_j$ has at most
$2(t-1)$ real roots.
As a result, $P$ has at most $2(h-1)+2m(t-1)$ non-zero real roots.
Since $0$ can also be a root, we add $1$ to this bound to obtain the final result.
\qed\end{proof}

\begin{lemma}\label{Boundphi}
Let $\phi\in\SPS(k,m,t,h)$. Then 
\[ r(\phi) \le r(\tilde\phi)+4h+4m(t-1)-1.\]
\end{lemma}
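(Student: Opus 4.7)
\medskip
\noindent\textbf{Proof plan for Lemma~\ref{Boundphi}.}
The strategy mirrors the inductive proof of Descartes' rule sketched in Section~\ref{approach}: I will divide $\phi$ by its first term $T_1$, apply Rolle's theorem to the resulting rational function $\psi=\phi/T_1$, and recover a polynomial by clearing denominators, obtaining precisely $\tilde\phi = g_1 T_1 \pi \psi'$. The only subtlety is that $\psi$ is rational, not polynomial, so Rolle's theorem applies only on intervals that avoid the poles of $\psi$, i.e.\ the real zeros of $T_1$. I will therefore separate the real roots of $\phi$ according to whether $T_1$ vanishes on them, and in the surviving case I will control how many times poles of $\psi$ can break up Rolle's argument.

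Concretely, the case $g_1\equiv 0$ is trivial since then $\tilde\phi=\phi$. Otherwise I partition the real roots of $\phi$ into $A=\{x:\phi(x)=0,\ T_1(x)\neq 0\}$ and $B=\{x:\phi(x)=0,\ T_1(x)=0\}$. For $B$, I note that $B$ is contained in the set of real zeros of $T_1=g_1\prod_j f_j^{\alpha_{1j}}$, which coincides with the real zeros of $g_1\cdot\pi$. Applying the exact counting argument already used in the proof of Lemma~\ref{BoundSPS1} to $g_1\cdot\pi$ (i.e.\ Descartes on the $h$-sparse polynomial $g_1$ and on each $t$-sparse $f_j$, plus the possible root $0$), I obtain
\[|B|\;\le\;2(h-1)+2m(t-1)+1\;=\;2h+2m(t-1)-1.\]

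For $A$, let $Z$ be the set of distinct real zeros of $T_1$, which satisfies $|Z|\le 2h+2m(t-1)-1$ by the same bound. Order the points of $A$ on the real line; between two consecutive points $a<b$ of $A$ whose open interval $(a,b)$ contains no point of $Z$, the function $\psi=\phi/T_1$ is $C^1$ on $[a,b]$ with $\psi(a)=\psi(b)=0$, so Rolle's theorem produces a zero of $\psi'$ in $(a,b)$; moreover this zero avoids $Z$, hence it is a bona fide zero of the polynomial $\tilde\phi=g_1T_1\pi\psi'$. Distinct Rolle intervals give distinct zeros of $\tilde\phi$, and at most $|Z|$ of the $|A|-1$ consecutive intervals can be spoiled by a point of $Z$. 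Therefore
\[r(\tilde\phi)\;\ge\;|A|-1-|Z|,\qquad\text{hence}\qquad |A|\;\le\;r(\tilde\phi)+1+|Z|\;\le\;r(\tilde\phi)+2h+2m(t-1).\]
Adding the two bounds gives $r(\phi)=|A|+|B|\le r(\tilde\phi)+4h+4m(t-1)-1$, as required.

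The main (minor) obstacle is the bookkeeping in the Rolle step: one must verify that the zero of $\psi'$ produced inside each good interval really is a zero of $\tilde\phi$ (and not an artefact of the pole structure of $\psi'$). This is exactly why the intervals have to be chosen inside the complement of $Z$, and why the $|Z|$ points of $Z$ must be subtracted from the count of usable Rolle intervals; once this is laid out, everything else is a direct application of Descartes' rule as in Lemma~\ref{BoundSPS1}.
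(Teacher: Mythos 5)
Your proposal is correct and follows essentially the same route as the paper's proof: both divide $\phi$ by $T_1$, apply a Rolle-type argument to the rational function $\psi=\phi/T_1$ while accounting for the poles coming from zeros of $T_1$, and bound $r(T_1)$ via Lemma~\ref{BoundSPS1}, arriving at $r(\phi)\le r(\tilde\phi)+2r(T_1)+1$. Your explicit partition into the sets $A$ and $B$ and the ``spoiled interval'' counting is simply a more detailed rendering of the paper's two inequalities $r(\phi)\le r(T_1)+r(\psi)$ and $r(\psi)\le r(\psi')+r(T_1)+1$.
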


\begin{proof}
If $g_1$ is zero in the definition of $\phi$, then $\tilde{\phi}=\phi$ which proves the
lemma.
 
Recall from the proof of Lemma~\ref{tildeIsSps} the notation $\psi=\phi/T_1$. 
If $g_1$ is not identically zero, by definition we have $\tilde{\phi}=g_1T_1\pi\psi'$, 
so the  number $r(\tilde\phi)$ of real roots of the polynomial $\tilde\phi$ is an upper bound on the number of real roots of $\psi'$. 

Since $\phi=T_1 \psi$, we have $r(\phi) \leq r(T_1)+r(\psi)$. 
Moreover, between two consecutive roots of the rational function 
$\psi$, we have a root of $\psi'$ or a root of the denominator $T_1$.
As a result, $r(\psi) \leq r(\psi')+r(T_1)+1$.
It follows that $r(\phi)\le r(\psi')+2r(T_1)+1\le r(\tilde\phi)+2r(T_1)+1$.
Moreover, the polynomial $T_1=g_1\cdot\prod_j f_j^{\alpha_{1j}}$ is in $\SPS(1,m,t,h)$. 
Thus by Lemma~\ref{BoundSPS1}, 
$T_1$ has at most $2h+2m(t-1)-1$ real roots.
We conclude that $\phi$ has at most
\begin{equation*}
r(\tilde\phi) + 2\cdot\left( 2h + 2m(t-1) - 1 \right) + 1 = r(\tilde\phi) + 4h + 4m(t-1) - 1
\end{equation*}
real roots. 
\qed\end{proof} 

\begin{proposition}\label{prop:Boundrphi}
Let $\phi\in\SPS(k,m,t,1)$. Then 
\[r(\phi)\le 2h_k+4\sum_{i=1}^{k-1} h_i + 2m(2k-1)(t-1)-k.\]
\end{proposition}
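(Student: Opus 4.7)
The plan is to prove this by iterating Lemma~\ref{Boundphi} along the sequence $(\phi_n)_{1 \le n \le k}$ from Definition~\ref{def:phin} and then closing off the recursion using Lemma~\ref{BoundSPS1}. Since $\phi_n \in \SPS(k-n+1, m, t, h_n)$ by construction, Lemma~\ref{Boundphi} applied to $\phi_n$ (with $h$ replaced by $h_n$) gives
\begin{equation*}
r(\phi_n) \le r(\phi_{n+1}) + 4h_n + 4m(t-1) - 1
\end{equation*}
for each $1 \le n \le k-1$. Telescoping these $k-1$ inequalities yields
\begin{equation*}
r(\phi_1) \le r(\phi_k) + 4\sum_{n=1}^{k-1} h_n + 4(k-1)m(t-1) - (k-1).
\end{equation*}

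At the end of the recursion, $\phi_k \in \SPS(1, m, t, h_k)$, so Lemma~\ref{BoundSPS1} gives $r(\phi_k) \le 2h_k + 2m(t-1) - 1$. Substituting this in and collecting the $m(t-1)$ terms, I get $4(k-1)m(t-1) + 2m(t-1) = 2(2k-1)m(t-1)$, while the constant terms give $-(k-1) - 1 = -k$. This yields exactly the claimed bound
\begin{equation*}
r(\phi) = r(\phi_1) \le 2h_k + 4\sum_{i=1}^{k-1} h_i + 2m(2k-1)(t-1) - k.
\end{equation*}

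One subtlety to address: Lemma~\ref{Boundphi} is applied with the parameter $h$ replaced by $h_n$ at stage $n$, and this is legitimate because $\phi_n$ lies in $\SPS(k-n+1, m, t, h_n)$ by the very definition of $h_n$ in Definition~\ref{def:phin}. I also need to note that the iteration still works when some $g_1^{(n)}$ happens to be zero, because in that case $\phi_{n+1} = \phi_n$ and the corresponding inequality is trivially (even loosely) satisfied, so the telescoping is unaffected.

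The proof itself is essentially a bookkeeping exercise; there is no real obstacle since Lemmas~\ref{BoundSPS1} and~\ref{Boundphi} do all the heavy lifting. The only thing to watch for is the sparsity parameter: the statement assumes $\phi \in \SPS(k,m,t,1)$, which fixes $h_1 = 1$ and is consistent with the convention in Definition~\ref{def:phin}. The terms $h_2, \ldots, h_k$ will be controlled in a subsequent lemma, which is where the genuine combinatorial work of bounding the sparsity growth will take place; here we merely need to recognize that the bound naturally splits into the contribution of the final term (factor $2$) and the contributions of the intermediate stages (factor $4$).
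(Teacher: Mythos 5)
Your proof is correct and follows essentially the same route as the paper: iterate Lemma~\ref{Boundphi} to telescope down to $\phi_k$, then close with Lemma~\ref{BoundSPS1}, and the arithmetic collecting $4(k-1)m(t-1)+2m(t-1)=2(2k-1)m(t-1)$ and $-(k-1)-1=-k$ is exactly right. The remark about zero $g_n^{(n)}$ is a harmless reiteration of a case already absorbed into Lemma~\ref{Boundphi}, so nothing further is needed.
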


\begin{proof}
Lemma~\ref{Boundphi} gives the following recurrence:
\begin{equation*}
r(\phi_n)\le r(\phi_{n+1})+4h_n+4m(t-1)-1.
\end{equation*}
Thus, we get
\begin{equation} \label{eq:rec}
r(\phi) \le r(\phi_k)+4\sum_{i=1}^{k-1} h_i + (k-1)(4m(t-1)-1).
\end{equation}
Since $\phi_k\in\SPS(1,m,t,h_k)$, Lemma~\ref{BoundSPS1} bounds its number of real roots:  
\begin{equation} \label{eq:phik}
r(\phi_k)\le 2h_k+2m(t-1)-1.
\end{equation}
The bound is a combination of \eqref{eq:rec} and \eqref{eq:phik}.
\qed\end{proof} 

Proposition~\ref{prop:Boundrphi} shows that in order to bound $r(\phi)$, we need a bound on $h_n$.

\begin{proposition}\label{prop:Boundh}
For all $n$, 
$h_n$ is bounded by $((m+2)t^m)^{2^{n-1}-1}$.
\end{proposition}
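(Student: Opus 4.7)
The plan is a straightforward induction on $n$, with the inductive step powered by a recurrence of the shape $h_{n+1} \le C \cdot h_n^2$ for the constant $C = (m+2)t^m$. Once this recurrence is in hand, the stated bound follows immediately since a relation $h_{n+1}\le C h_n^2$ with $h_1=1$ unrolls to $h_n \le C^{1+2+4+\dots+2^{n-2}} = C^{2^{n-1}-1}$.

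To set up the recurrence, I would reread the explicit formula for $\tilde\phi$ derived in the proof of Lemma~\ref{tildeIsSps}, but applied at step $n$ of the recursion. There the ``first'' summand is indexed by $n$ rather than $1$, so for each $i>n$ we have
\[\widetilde{g_i^{(n)}} \;=\; \pi\bigl(g_n^{(n)} (g_i^{(n)})' - (g_n^{(n)})' g_i^{(n)}\bigr) \;+\; g_n^{(n)} g_i^{(n)} \sum_{j=1}^{m} (\alpha_{ij}-\alpha_{nj})\, f_j' \prod_{l\ne j} f_l .\]
Now I would bound the sparsity of each factor. Each $g_n^{(n)}$, $g_i^{(n)}$ and their derivatives are $h_n$-sparse, so any product of two of them is $h_n^2$-sparse. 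The polynomial $\pi=\prod_j f_j$ has at most $t^m$ monomials, and the same holds for each $f_j'\prod_{l\ne j} f_l$. Hence the first bracketed term contributes at most $2 t^m h_n^2$ monomials, while the second contributes at most $m\,t^m h_n^2$ monomials, giving
\[h_{n+1} \;\le\; (m+2)\, t^m\, h_n^2 .\]

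With the recurrence established, the induction is routine. Writing $C=(m+2)t^m$, the base case $h_1=1=C^{2^0-1}$ holds by definition, and assuming $h_n\le C^{2^{n-1}-1}$ we get
\[h_{n+1} \;\le\; C\cdot h_n^2 \;\le\; C\cdot C^{2(2^{n-1}-1)} \;=\; C^{2^n-1},\]
which is exactly the claimed bound.

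The only real subtlety I anticipate is bookkeeping: one must check that ``$h$-sparse'' is preserved under differentiation (trivially) and that the sparsity of the products $\pi$ and $f_j'\prod_{l\ne j}f_l$ is indeed at most $t^m$ rather than something worse; the multiplicative constants $(\alpha_{ij}-\alpha_{nj})$ do not affect sparsity at all, which is what makes the bound independent of the exponents~$\alpha_{ij}$. Everything else is a mechanical monomial count followed by a doubling induction, so I do not expect any genuine obstacle.
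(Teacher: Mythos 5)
Your proposal is correct and follows essentially the same route as the paper: both establish the recurrence $h_{n+1}\le (m+2)t^m h_n^2$ by counting the $m+2$ product terms in the explicit formula for $\tilde g_i$, each being a product of $m$ $t$-sparse polynomials with two $h_n$-sparse ones, and then unroll the recurrence from $h_1=1$. The only cosmetic difference is that you split the $(m+2)$ count into $2$ from the $\pi(g_n g_i'-g_n' g_i)$ part and $m$ from the sum, whereas the paper states the count of $m+2$ terms directly.
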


\begin{proof}
As showed in the proof of Lemma~\ref{tildeIsSps}, $\tilde\phi = \sum_{i=2}^k \tilde g_i P_i$
where each $\tilde g_i$ is $\tilde h$-sparse. More precisely,  
\[\tilde g_i = (g_1g_i'-g_1'g_i)\prod_{j=1}^m f_j+g_1g_i\sum_{j=1}^m(\alpha_{ij}-\alpha_{1j})f_j'\prod_{l\neq j} f_l.\]
Thus $\tilde g_i$ is a sum of $(m+2)$ terms, and each term is a product of $m$ $t$-sparse polynomials by two $h$-sparse polynomials. Thus $\tilde h \le (m+2)t^mh^2$. 

This gives the following recurrence relation on $h_n$:
\[\begin{cases}
    h_1 &=1\\
    h_{n+1} &\le (m+2)t^mh_n^2
\end{cases}\]
Therefore, $h_n\le ((m+2)t^m)^{2^{n-1}-1}$.
\qed\end{proof} 

Now, we combine Propositions~\ref{prop:Boundrphi} and \ref{prop:Boundh}
to obtain our first bound on the number of roots of a polynomial in 
$\SPS(k,m,t)$.

\begin{theorem}\label{thm:naive}
Let $\phi\in 
\SPS(k,m,t)$: we have $\phi=\sum_{i=1}^k\prod_{j=1}^m f_j^{\alpha_{ij}}$ where for all $i$ and $j$, $f_j$ is $t$-sparse and $\alpha_{ij}\ge 0$. Then
$r(\phi)\le  C\times((m+2)t^m)^{2^{k-1}-1}$ 
for some 
universal
constant $C$. 
\end{theorem}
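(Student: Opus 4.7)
The plan is to chain together Propositions~\ref{prop:Boundrphi} and~\ref{prop:Boundh}. Since a polynomial in $\SPS(k,m,t)$ lies in $\SPS(k,m,t,1)$ (each $g_i$ is the constant $1$, hence $1$-sparse), Proposition~\ref{prop:Boundrphi} applies with $h_1 = 1$ and yields
\[r(\phi) \le 2h_k + 4\sum_{i=1}^{k-1} h_i + 2m(2k-1)(t-1) - k.\]
Proposition~\ref{prop:Boundh} then bounds each $h_i$ by $B^{2^{i-1}-1}$, where $B = (m+2)t^m$. So the whole argument reduces to checking that the right-hand side above is at most a universal constant times $B^{2^{k-1}-1}$.

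The key observation is that the sequence $B^{2^{i-1}-1}$ is doubly exponential in $i$, so its last term $h_k$ dominates everything else. Concretely, for $i \le k-1$ we have $h_i/h_k \le B^{2^{i-1}-2^{k-1}} \le B^{-2^{k-2}}$, and since $B \ge 3$ the partial sum $\sum_{i=1}^{k-1} h_i$ is at most $(k-1)B^{-2^{k-2}} \cdot h_k$; the factor $(k-1)B^{-2^{k-2}}$ is bounded by some absolute constant uniformly over $k \ge 2$, $m \ge 1$, $t \ge 1$. Similarly, the additive term $2m(2k-1)(t-1) = O(kmt)$ is trivially dominated by $B^{2^{k-1}-1}$ as soon as $k \ge 2$, since $B \ge mt$ already at the first doubly-exponential level.

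Nothing here is really an obstacle: the proof is essentially bookkeeping once the two propositions are available. The only point requiring care is the choice of $C$ uniform in $k$, $m$, $t$; one checks a couple of small cases by hand (for $k = 1$ the bound $r(\phi) \le 2m(t-1) + 1$ from Lemma~\ref{BoundSPS1} applies directly and can be absorbed into $C$, while for $k \ge 2$ the geometric-style estimate above does the job) and lets $C$ be the maximum of the constants arising.
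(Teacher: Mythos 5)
Your proposal follows the paper's proof essentially verbatim: apply Proposition~\ref{prop:Boundrphi} with $h_1=1$ (since $\SPS(k,m,t)\subseteq\SPS(k,m,t,1)$), then substitute the bound $h_n\le((m+2)t^m)^{2^{n-1}-1}$ from Proposition~\ref{prop:Boundh}, and argue that the last (doubly-exponentially largest) term dominates. The extra detail you give on why $(k-1)B^{-2^{k-2}}$ and the $O(kmt)$ term are absorbed by $B^{2^{k-1}-1}$ for $k\ge 2$ is just a more explicit version of the paper's ``it is then clear''. One small slip: your parenthetical handling of $k=1$ is wrong. For $k=1$ the theorem's right-hand side is $C\cdot B^0=C$, yet $r(\phi)$ can be as large as $2m(t-1)+1$, which grows with $t$ and $m$ and therefore cannot be ``absorbed into $C$''. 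In fact the theorem as stated is only valid for $k\ge 2$ (the paper's own proof has the same unstated restriction), so the correct remedy is to observe that $k\ge 2$ is the case of interest, not to claim a $k=1$ case that does not hold.
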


\begin{proof}
It follows from Propositions~\ref{prop:Boundrphi} and~\ref{prop:Boundh}
that the number of real roots of a polynomial $\phi\in\SPS(k,m,t,1)$ is
\[r(\phi)\le 2((m+2)t^m)^{2^{k-1}-1}+4\sum_{i=1}^{k-1}((m+2)t^m)^{2^{i-1}-1}  + 2m(2k-1)(t-1)-k.\] 
To simplify this expression, note that
\[\sum_{i=1}^{k-1}((m+2)t^m)^{2^{i-1}-1}\le (k-1)((m+2)t^m)^{2^{k-2}-1}.\]
It is then clear that the function $((m+2)t^m) ^{2^{k-1}-1}$ dominates the two smallest terms in the 
bound on $r(\phi)$.
The result follows since $\SPS(k,m,t) \subseteq \SPS(k,m,t,1)$.
\qed\end{proof}

\subsection{A tighter analysis}\label{sec:improvment}

This 
section is devoted to an improved bound for $h_n$, the number of monomials in the polynomials $g_i^{(n)}$.
That automatically sharpens the bound we give for the number of real roots of a polynomial in $\SPS(k,m,t)$. 

Let $P$ be a polynomial, and let $S(P)$ be
its support, that is the set of integers $i$ such that
$X^i$ has a nonzero coefficient 
in $P$.
Let $A$ be a set of integers, we write $A-\bf{1}$ for the set  $\{ i-1 \mid i \in A\}$.
If $A$ and $B$ are two sets, we write $A+B$ for the set $\{i+j \mid i\in A, j \in B\}$
and we write $n \times A$ for the sum of $n$ copies of the set 
$A$.
Remark that the sum is commutative 
and 
that $A+(B-\textbf{1}) = (A-\textbf{1})+B$. 
We shall use some easy properties of the supports of polynomials. The proof is left to the reader.
\begin{lemma}\label{observation}
Let $P$ and $Q$ be two polynomials, then 
\begin{enumerate}
\item $S(P') \subseteq  S(P)-\bf{1}$;
\item $S(P+Q) \subseteq S(P) \cup S(Q)$;
\item $S(PQ) \subseteq S(P)+S(Q)$.
\end{enumerate}
\end{lemma}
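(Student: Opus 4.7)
The plan is to handle all three inclusions by direct expansion in the monomial basis: read off the support of the resulting polynomial and check termwise which exponents can possibly survive. No induction or auxiliary construction will be needed, and the three parts are independent of each other.

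For (1), I would write $P = \sum_{i \in S(P)} a_i X^i$ with all $a_i \neq 0$, differentiate termwise to get $P' = \sum_{i \in S(P),\, i \geq 1} i a_i X^{i-1}$, and note that since the base field $\RR$ has characteristic zero, the coefficient $i a_i$ is nonzero for every $i \in S(P)$ with $i \geq 1$. Hence every exponent occurring in $P'$ has the form $i-1$ for some $i \in S(P)$, yielding $S(P') \subseteq S(P) - \mathbf{1}$. The only thing to note is that equality can fail, precisely when $0 \in S(P)$, but the lemma only asks for the inclusion.

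For (2) and (3), I would write $P = \sum_i a_i X^i$ and $Q = \sum_i b_i X^i$, extending to all non-negative integers by setting missing coefficients to zero. For (2), the coefficient of $X^i$ in $P+Q$ is $a_i + b_i$, which can be nonzero only if at least one of $a_i, b_i$ is nonzero, i.e.\ if $i \in S(P) \cup S(Q)$. For (3), expand $PQ = \sum_k \left(\sum_{i+j=k} a_i b_j\right) X^k$; if $k \in S(PQ)$ then the inner sum is nonzero, so some summand $a_i b_j$ with $i+j=k$ must be nonzero, forcing $i \in S(P)$ and $j \in S(Q)$, and hence $k = i+j \in S(P) + S(Q)$.

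There is no real technical obstacle here; the only subtlety worth flagging is that all three inclusions may be strict — because of the dropped constant term in (1) and because of coincidental coefficient cancellations in (2) and (3). That is precisely why the lemma is stated as a set inclusion rather than an equality, and it is nevertheless tight enough for the support-tracking arguments that will follow in the next subsection.
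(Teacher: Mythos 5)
Your proof is correct, and since the paper explicitly leaves this lemma's proof to the reader, the direct expansion in the monomial basis is exactly the intended argument; the observation about characteristic zero making $ia_i \neq 0$ in part (1) and the remark that cancellations make the inclusions potentially strict are both accurate and appropriate.
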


Now consider a 
polynomial $\phi\in\SPS(k,m,t)$ as in the previous section. 
Recall that $\phi_n =  \sum_{i=n}^k g_i^{(n)} P_i$ is
the polynomial obtained from $\phi$ after $n$ steps of the transformation in the first section.
Let $ S$ be the set $(\sum_j S(f_j)) -\textbf{1}$. We prove by induction on $n$ that for all $i>n$, $g_i^{(n)}$ satisfies $S(g_i^{(n)}) \subseteq (2^n-1) \times S$. To this end, we prove the following lemma.

\begin{lemma}\label{lemma:support}
Let $\phi\in\SPS(k,m,t,h)$, and $\tilde\phi\in\SPS(k-1,m,t,\tilde h)$ as defined in Lemma~\ref{tildeIsSps}. Then
\[\bigcup_{i=2}^k S(\tilde g_i)\subseteq 2\times \left(\bigcup_{i=1}^k S(g_i)\right)+S.\]
\end{lemma}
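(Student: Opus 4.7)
The plan is to compute $S(\tilde g_i)$ directly from the explicit expression for $\tilde g_i$ derived in the proof of Lemma~\ref{tildeIsSps}, by applying the three elementary rules of Lemma~\ref{observation} (derivative shifts the support by $-\textbf{1}$, addition takes unions, multiplication takes Minkowski sums) term by term.

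Recall that
\[\tilde g_i = (g_1g_i'-g_1'g_i)\prod_{j=1}^m f_j \;+\; g_1g_i\sum_{j=1}^m(\alpha_{ij}-\alpha_{1j})f_j'\prod_{l\neq j} f_l.\]
I would handle the two summands separately. For the first one, observe that both $g_1 g_i'$ and $g_1' g_i$ have supports contained in $S(g_1)+S(g_i)-\textbf{1}$, so the same holds for the difference; multiplying by $\prod_j f_j$ then adds $\sum_j S(f_j)$, so this part of the support lies in
\[S(g_1)+S(g_i)+\Bigl(\textstyle\sum_j S(f_j)-\textbf{1}\Bigr)=S(g_1)+S(g_i)+S.\]
For the second summand, each inner piece $f_j'\prod_{l\neq j}f_l$ has support in $(S(f_j)-\textbf{1})+\sum_{l\neq j}S(f_l)=S$, so the full sum over $j$ (ignoring integer coefficients, which don't affect supports) has support in $S$; multiplying by $g_1 g_i$ again yields a support in $S(g_1)+S(g_i)+S$.

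Taking the union over the two pieces gives $S(\tilde g_i)\subseteq S(g_1)+S(g_i)+S$. Since both $S(g_1)$ and $S(g_i)$ are contained in $A:=\bigcup_{i'=1}^k S(g_{i'})$, their Minkowski sum lies in $A+A=2\times A$. Finally, taking the union over $i=2,\ldots,k$ preserves this containment, yielding
\[\bigcup_{i=2}^k S(\tilde g_i)\subseteq 2\times\Bigl(\bigcup_{i'=1}^k S(g_{i'})\Bigr)+S,\]
which is exactly the claim.

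There is no real obstacle here: the proof is a bookkeeping exercise in the support calculus of Lemma~\ref{observation}. The only point that requires a bit of attention is noticing that the $-\textbf{1}$ shifts coming from differentiation (one shift in each summand of $\tilde g_i$) combine with the factor $\prod_j f_j$ or $\prod_{l\neq j} f_l$ to produce exactly one copy of $S=\sum_j S(f_j)-\textbf{1}$, rather than something larger; this is what makes the bound tight enough to drive the subsequent inductive argument on $S(g_i^{(n)})\subseteq (2^n-1)\times S$.
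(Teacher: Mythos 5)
Your proof is correct and takes essentially the same approach as the paper: both expand the explicit formula for $\tilde g_i$ and apply the three support rules of Lemma~\ref{observation} term by term, grouping the $-\textbf{1}$ shift from differentiation with $\prod_j f_j$ (resp.\ $\prod_{l\neq j}f_l$) to produce exactly one copy of $S$. The only cosmetic difference is that you carry $S(g_1)+S(g_i)$ explicitly and only pass to $2\times\bigl(\bigcup_i S(g_i)\bigr)$ at the end, whereas the paper works with the union $S_g$ from the start.
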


\begin{proof}
To simplify notations, let us define $S_g=\bigcup_i S(g_i)$ and $S_{\tilde g}=\bigcup_i S(\tilde g_i)$. We aim to show that $S_{\tilde g}\subseteq 2\times S_g +S$.

Recall that \[\tilde{g}_i = \pi\cdot(g_ng_i'-g_n'g_i)+g_ng_i\sum_j(\alpha_{ij}-\alpha_{nj})f_j'\prod_{l\neq j} f_l.\]
Applying Lemma~\ref{observation}(2) yields 
\[S(\tilde{g}_i) \subseteq S(\pi g_ng_i') \cup S(\pi g_n'g_i) \cup S\biggl(g_ng_i\sum_j(\alpha_{ij}-\alpha_{nj})f_j'\prod_{l\neq j} f_l\biggr).\]
By Lemma~\ref{observation}(3), we have 
\[S(\pi g_ng_i') \subseteq  S(\pi)+S(g_n)+S(g_i').\]
Moreover, $S(g_n)\subseteq S_g$ and $S(g_i')\subseteq \bigcup_i (S(g_i)-\mathbf 1)=S_g-\mathbf 1$. Thus
\[S(\pi g_ng_i')\subseteq S(\pi)+ S_g+(S_g-\mathbf 1).\]
Since $-\textbf{1}$ commutes with $+$, we obtain:
\[S(\pi g_ng_i') \subseteq (S(\pi)-\textbf{1}) + 2 \times S_g.\]
Now, $S(\pi)-\textbf{1}=S$ by definition, and $S(\pi g_ng_i') \subseteq S + 2 \times S_g$.
The proof is the same for $S(\pi g_n'g_i)\subseteq S+ 2 \times S_g$.

Finally, it holds that \[ S\biggl(g_ng_i\sum_j(\alpha_{ij}-\alpha_{nj})f_j'\prod_{l\neq j} f_l\biggr) \subseteq 2 \times S_g + \bigcup_j S(f_j'\prod_{l\neq j} f_l).\]
Furthermore,
\[\bigcup_j S(f_j'\prod_{l\neq j} f_l) \subseteq  \bigcup_j \biggl((S(f_j) - \textbf{1})+\sum_{l\neq j} S(f_l)\biggr) = S.\]
Therefore we have \[  S\biggl(g_ng_i\sum_j(\alpha_{ij}-\alpha_{nj})f_j'\prod_{l\neq j} f_l\biggr) \subseteq S+2 \times S_g. \]

We proved that for every $i>n$, $S(\tilde g_i)\subseteq S+2\times S_g$. This is enough to conclude that 
\[S_{\tilde g}\subseteq S+ 2\times S_g.\]
\qed\end{proof} 

\begin{proposition}\label{prop:Sgi}
Let $\phi\in\SPS(k,m,t)$ and let $\phi_n$ be defined as in Definition~\ref{def:phin}. Then for $1\le n\le i\le k$, 
\[S(g_i^{(n)})\subseteq (2^{n-1}-1)\times S.\]
\end{proposition}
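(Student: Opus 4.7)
The plan is to prove this by induction on $n$, with the key engine being Lemma~\ref{lemma:support} applied at each step.

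For the base case $n=1$, since $\phi\in\SPS(k,m,t)$ all the initial polynomials $g_i=g_i^{(1)}$ are the constant $1$, so $S(g_i^{(1)})=\{0\}$. Under the natural convention that $0\times S$ denotes the empty sum of copies of $S$, i.e.\ the singleton $\{0\}$, the containment $S(g_i^{(1)})\subseteq (2^{0}-1)\times S=\{0\}$ holds.

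For the inductive step, suppose the statement holds at level $n$, so that $S(g_i^{(n)})\subseteq (2^{n-1}-1)\times S$ for every $i\ge n$. Taking the union over $i$ gives
\[
\bigcup_{i\ge n} S(g_i^{(n)})\subseteq (2^{n-1}-1)\times S.
\]
Now $\phi_{n+1}=\tilde\phi_n$ and by construction $g_i^{(n+1)}=\widetilde{g_i^{(n)}}$ for $i\ge n+1$. Applying Lemma~\ref{lemma:support} to $\phi_n$ (viewed as an element of $\SPS(k-n+1,m,t,h_n)$ after relabeling) gives
\[
\bigcup_{i\ge n+1} S(g_i^{(n+1)}) \subseteq 2\times\bigcup_{i\ge n} S(g_i^{(n)}) + S.
\]
Combining the two containments and using that the Minkowski sum respects inclusions yields
\[
\bigcup_{i\ge n+1} S(g_i^{(n+1)}) \subseteq 2\times\bigl((2^{n-1}-1)\times S\bigr) + S = (2^n-2)\times S + S = (2^n-1)\times S,
\]
which is exactly the desired bound at level $n+1$, completing the induction.

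There is essentially no hard step here: Lemma~\ref{lemma:support} has already done the real work of tracking how supports interact under the differentiation-and-clearing-denominators operation that defines the tilde transformation. The only thing to be careful about is the interpretation of the index $(2^{n-1}-1)\times S$ at $n=1$ as the trivial sum $\{0\}$, which matches the base case precisely because $\SPS(k,m,t)$ forces the initial $g_i$ to be constants. The arithmetic $2(2^{n-1}-1)+1=2^n-1$ is what makes the bound recurse cleanly with no loss.
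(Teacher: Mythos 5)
Your proof is correct and follows exactly the paper's approach: induction on $n$, with the base case handled by noting that the $g_i^{(1)}$ are constants (so $S(g_i^{(1)})=\{0\}=0\times S$) and the inductive step supplied by Lemma~\ref{lemma:support} together with the arithmetic $2(2^{n-1}-1)+1=2^n-1$. You have simply spelled out the details that the paper leaves implicit in its one-line justification ``Lemma~\ref{lemma:support} proves the induction step.''
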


\begin{proof}
We actually show by induction on $n$ that $\bigcup_{i\ge n} S(g_i^{(n)})\subseteq (2^{n-1}-1)\times S$. For $n=1$, it is clear since the $g_i^{(1)}$ have degree $0$.
By definition $g_i^{(n+1)}=\widetilde{ g_i^{(n)} }$, thus
 Lemma~\ref{lemma:support} proves the induction step.
\qed\end{proof}

We need the following combinatorial lemma to improve the bound of Theorem~\ref{thm:naive}.

\begin{lemma}\label{lemma:S}
Let $S$ be a set of integers and $p>0$. Then 
\[\left|p\times S\right|\le\binom{p+|S|}{p}\le \left[e\times\left(1+\frac{|S|}{p}\right)\right]^p.\]
\end{lemma}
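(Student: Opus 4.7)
The plan is to observe that $p \times S$ is the image of the map sending a $p$-tuple $(s_1, \dots, s_p) \in S^p$ to the sum $s_1 + \cdots + s_p$, and that since addition is commutative, this map factors through multisets of size $p$ drawn from $S$. Hence $|p \times S|$ is at most the number of such multisets, which is the standard combinatorial quantity $\binom{|S| + p - 1}{p}$. This is in turn bounded above by $\binom{|S|+p}{p}$, giving the first inequality.

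For the second inequality I would invoke the classical estimate $\binom{n}{k} \le (en/k)^k$, which follows from $\binom{n}{k} \le n^k/k!$ together with $k! \ge (k/e)^k$ (a consequence of the series expansion of $e^k$). Applying this with $n = p + |S|$ and $k = p$ gives
\[
\binom{p+|S|}{p} \le \left(\frac{e(p+|S|)}{p}\right)^p = \left[e\left(1 + \frac{|S|}{p}\right)\right]^p,
\]
which is exactly the desired bound.

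Neither step presents any real obstacle: the first is a one-line commutativity argument counting multisets, and the second is a textbook estimate for binomial coefficients. The only mild subtlety worth mentioning in the write-up is the passage from $\binom{|S|+p-1}{p}$ to $\binom{|S|+p}{p}$, which the lemma tolerates by stating the slightly weaker form of the multiset count (presumably for cosmetic symmetry with the second bound).
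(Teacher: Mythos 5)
Your proof is correct and follows essentially the same route as the paper: both bound $|p\times S|$ by the number of size-$p$ multisets from $S$ (the paper phrases this as counting non-decreasing sequences and reproves the multiset count via a stars-and-bars-style bijection, whereas you quote it directly), and both then apply the standard estimate $\binom{n}{k}\le(en/k)^k$. Your remark about relaxing $\binom{|S|+p-1}{p}$ to $\binom{|S|+p}{p}$ matches what the paper implicitly does as well.
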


\begin{proof}
We want to count the number of different sums of $p$ terms from $S$. This is bounded from above by the number of non-decreasing sequences of elements from $S$ of length $p$ (where elements can be repeated). 
To count such non-decreasing sequences, we can assume without loss of generality that $S=\{1,\dots,N\}$
where $N=|S|$. To a non-decreasing sequence $(s_1,\dots, s_p)$, we associate the sequence $(t_1,\dots,t_p)$ defined by $t_i=s_i+i-1$ for $1\le i\le p$. We claim that this defines a bijection between non-decreasing sequences of length $p$ in $\{1,\dots N\}$ and increasing sequences of length $p$ in $\{1,\dots,N+p\}$. Its inverse is indeed defined by mapping $(t_1,\dots,t_p)$ to $(t_1,t_2-1,\dots,t_p-p-1)$. Now increasing sequences of length $p$ in $\{1,\dots,N+p\}$ are subsets of size $p$ of this set. Thus there are $\binom{N+p}{p}$ such sequences. 

A well known bound on the binomial coefficient $\binom nk$ is $(en/k)^k$. Thus $\binom{N+p}{p}\le (e(1+N/p))^p$. 
\qed\end{proof}

Proposition~\ref{prop:Sgi} and Lemma~\ref{lemma:S} improve the bound on $h_n$ given in Section~\ref{sec:first}. Consequently,
we obtain a tighter bound on the number of real roots of a $\SPS(k,m,t)$ polynomial.

\begin{theorem}\label{thm:support} 
Let $\phi\in
\SPS(k,m,t)$. Then $\phi$ has at most 
\[C\times \left[e\times\left(1+\frac{t^m}{2^{k-1}-1}\right)\right]^{2^{k-1}-1}\] real roots, 
where $C$ is a universal constant.
\end{theorem}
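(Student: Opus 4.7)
The plan is to combine the support control from Proposition~\ref{prop:Sgi} with the combinatorial estimate of Lemma~\ref{lemma:S} to produce a sharper bound on $h_n$, and then feed this new bound into the root-counting recurrence of Proposition~\ref{prop:Boundrphi}.

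First I would bound $|S|$. Recall $S = \left(\sum_{j=1}^m S(f_j)\right) - \mathbf{1}$, and each $f_j$ is $t$-sparse, so $|S(f_j)| \le t$. Using $|A+B| \le |A|\cdot|B|$ and the fact that translation by $-\mathbf{1}$ preserves cardinality, I get $|S| \le t^m$.

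Next, I would upper bound $h_n$. By Proposition~\ref{prop:Sgi}, every $g_i^{(n)}$ (for $i \ge n$) has its support contained in $(2^{n-1}-1) \times S$, so
\[h_n \le \bigl|(2^{n-1}-1) \times S\bigr|.\]
Applying Lemma~\ref{lemma:S} with $p = 2^{n-1}-1$ (and noting the $n=1$ case is trivial since $h_1=1$), this gives
\[h_n \le \left[e\times\left(1+\frac{t^m}{2^{n-1}-1}\right)\right]^{2^{n-1}-1}.\]

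Finally I would plug this into Proposition~\ref{prop:Boundrphi}. The dominant contribution comes from the $2h_k$ term, which is already of the form claimed in the theorem. For the sum $\sum_{i=1}^{k-1} h_i$, I would argue it is absorbed into the main term by showing that $h_{i}$ grows at least geometrically in $i$ (so the sum is bounded by a constant multiple of $h_{k-1}$), and that $h_{k-1}$ is in turn dwarfed by $h_k$, both because the exponent $2^{k-1}-1$ roughly doubles the exponent $2^{k-2}-1$ and because the base stays bounded by $e(1+t^m)$. The remaining additive term $2m(2k-1)(t-1) - k$ is polynomial in $t$ and hence is swallowed by the exponential main term up to a universal constant. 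Collecting these observations yields the bound $C \cdot [e(1+t^m/(2^{k-1}-1))]^{2^{k-1}-1}$ for a universal constant $C$, and we use $\SPS(k,m,t) \subseteq \SPS(k,m,t,1)$ to transfer the bound as in Theorem~\ref{thm:naive}.

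The only slightly delicate point is verifying that the ``lower'' terms $h_1,\dots,h_{k-1}$ and the polynomial correction truly fit under a universal constant factor of $h_k$: this requires a short estimate comparing $[e(1+t^m/(2^{k-2}-1))]^{2^{k-2}-1}$ with $[e(1+t^m/(2^{k-1}-1))]^{2^{k-1}-1}$, which is straightforward since squaring the former expression (roughly) dominates it. Everything else is bookkeeping.
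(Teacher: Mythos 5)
Your proposal is correct and follows the paper's own proof essentially step by step: bound $|S|\le t^m$, combine Proposition~\ref{prop:Sgi} with Lemma~\ref{lemma:S} to get $h_n \le [e(1+t^m/(2^{n-1}-1))]^{2^{n-1}-1}$, and feed this into Proposition~\ref{prop:Boundrphi}, observing that the $h_k$ term dominates. The paper is even terser than you about absorbing the lower-order terms into a universal constant, so your extra care on that point is, if anything, an improvement rather than a deviation.
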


\begin{proof}
As in Section~\ref{sec:first}, we combine Proposition~\ref{prop:Boundrphi} with the bound we have just obtained for $h_n$. 
Recall that \[r(\phi)\le 2h_k+4\sum_{i=1}^{k-1} h_i + 2m(2k-1)(t-1)-k.\]
Moreover
the polynomials $f_j$ in a $\SPS(k,m,t)$ polynomial are $t$-sparse, thus $|S|=\left|(\sum_j S(f_j)) -\textbf{1}\right|\leq t^m$. 
We can combine Proposition~\ref{prop:Sgi} and Lemma~\ref{lemma:S} with $S$ and $p=2^{k-1}-1$ to obtain
$h_k\le \left[e\times\left(1+\frac{t^m}{2^{k-1}-1}\right)\right]^{2^{k-1}-1}$.
Since it dominates the other terms of the sum when $t$ grows, this 
proves the theorem.
\qed\end{proof} 

The bound of Lemma~\ref{lemma:S} is reached for a set $S$ of ``far from each other'' integers. More precisely, if the integers in $S$ form a increasing sequence $(s_n)$, such that for all $n$, $ps_n< s_{n+1}$, then $|p\times S|=\binom{p+S}{p}$. Indeed, two different sums of $p$ integers of $S$ cannot have the same value in this case. If this condition is not satisfied,
one can build a set $S$, whose two different sums of $p$ terms have the same value.

In the proof of Theorem~\ref{thm:support}, $S$ is built from the supports of the $f_j$'s. In this case, the preceding discussion shows that if the degrees of the $f_j$'s are not very far from each other, we can improve our bound. In particular, it can be shown that if the monomials of the $f_j$'s are clustered, and each cluster has a constant diameter, then $t^m$ can be replaced by the number of cluster in the statement of the theorem.

\section{Lower bounds}\label{sec:lower}

In this section we introduce a subclass $\mSPS(k,m)$ of the class of 
``easy to compute'' multivariate polynomial families, 
and we use the results of Section \ref{sec:first} 
to show that it does not contain the permanent family.
The polynomials in a $\mSPS(k,m)$ family 
have the same structure as the univariate polynomials in 
the class $\SPS(k,m,t)$ from Definition~\ref{SPS}.
In this section, polynomial families are denoted by their general term in brackets: 
The polynomial $P_n$ is the $n$-th polynomial of the family $(P_n)$. 
When there is no ambiguity on the
number of variables, we denote by $\vec{X}$ the tuple of variables of a polynomial $P_n$.

\begin{definition} \label{mSPS}
We say that a sequence of polynomials $(P_n)$ 
is in $\mSPS(k,m)$ if there is a polynomial
$Q$ such that 
for all $n$:
\begin{itemize}
 \item[(i)]  
$P_n$ depends on at most $Q(n)$ variables.
 \item[(ii)] $P_n(\vec{X}) = \sum_{i=1}^k \prod_{j=1}^{m} f_{jn}^{\alpha_{ij}}(\vec{X})$
 \item[(iii)] The bitsize of $\alpha_{ij}$ is bounded by $Q(n)$.
 \item[(iv)] For all $1\le j\le m$, the 
    polynomial $f_{jn}$ has a constant free circuit of size $Q(n)$ and is $Q(n)$-sparse.
\end{itemize}
\end{definition}

\begin{remark}
 If $(P_n) \in \mSPS(k,m)$ then each $P_n$ has a constant free circuit 
of size polynomial in $n$.
Indeed from the constant free circuits of the polynomials $f_{jn}$ we can build a constant free circuit for $P_n$.
We have to take the $\alpha_{ij}$-th power of $f_{jn}$, which can be done with a circuit of size polynomial 
in the bitsize of $\alpha_{ij}$ thanks to fast exponentiation. The size of the final circuit is up to a constant the sum 
of the sizes of these powering circuits and of the circuits giving  $f_{jn}$, which is thus polynomial in $n$. 
\end{remark}

\begin{definition}
The Pochhammer-Wilkinson polynomial of order $2^n$ is defined
by $\PW_n= \displaystyle{\prod_{i=1}^{2^n} (X-i)}$.
\end{definition}

\begin{definition}
The Permanent over $n^2$ variables is defined by $\PER_n = \displaystyle{\sum_{\sigma \in \Sigma_n} \prod_{i=1}^{n} X_{i\sigma(i)}}$ 
where $\Sigma_n$ is the set of permutations of $\{1,\dotsc,n\}$.
\end{definition}

We now give a lower bound on the Permanent, using its completeness for $\VNP$ \cite{valiant1979completeness},
a result of B\"urgisser on the Pochhammer-Wilkinson polynomials \cite{burgisser2009defining} and our bound on the roots
of the polynomials in $\SPS(k,m,t)$.

\begin{theorem} \label{lb}
 The family of polynomials $(\PER_n)$ is not in $\mSPS(k,m)$ for any $k$ and $m$, 
i.e., there is no representation of the permanent family of the form
$$\PER_n(\vec{X})=
\sum_{i=1}^k \prod_{j=1}^{m} f_{jn}^{\alpha_{ij}}(\vec{X})$$
where the bitsize of the $\alpha_{ij}$, the sparsity of the polynomials 
$f_{jn}$ and their constant-free arithmetic circuit complexity are 
all bounded by a polynomial function $Q(n)$.
\end{theorem}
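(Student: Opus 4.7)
The plan is to argue by contradiction, assuming $(\PER_n)\in\mSPS(k,m)$ for some fixed constants $k$ and $m$, and deriving a conflict with the real-root bound of Theorem~\ref{thm:support}. The target object for the contradiction is the Pochhammer--Wilkinson polynomial $\PW_n$, since it has exactly $2^n$ distinct real roots, and any superpolynomial quantity of real roots will violate a bound that is polynomial in $n$ when $k,m$ are constant.

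The first step is to transfer the assumed efficient representation of $\PER_{p(n)}$ to a representation of $\PW_n$ in $\SPS(k,m,t)$ with $t$ polynomial in $n$. I would invoke B\"urgisser's result from~\cite{burgisser2009defining}: there exists a polynomial $p$ and a $p(n)\times p(n)$ matrix $M_n(X)$, whose entries are either integer constants of bitsize polynomial in $n$ or equal to the indeterminate $X$, such that $\PW_n(X)=\PER_{p(n)}(M_n(X))$. Substituting these entries into the $\mSPS(k,m)$ representation of $\PER_{p(n)}$ gives
\[
\PW_n(X)=\sum_{i=1}^k\prod_{j=1}^m \bigl(f_{j,p(n)}(M_n(X))\bigr)^{\alpha_{ij}}.
\]
The crucial observation is the preservation of sparsity: since each multivariate variable of $f_{j,p(n)}$ is replaced by either an integer or by $X$, every monomial $\prod_\ell X_\ell^{e_\ell}$ of $f_{j,p(n)}$ collapses to a single univariate monomial of the form $c\cdot X^d$. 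Hence $\tilde f_j(X):=f_{j,p(n)}(M_n(X))$ is $Q(p(n))$-sparse, and the exponents $\alpha_{ij}$ are unchanged. This places $\PW_n$ in the univariate class $\SPS(k,m,Q(p(n)))$.

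The second step is to derive the contradiction. By Theorem~\ref{thm:support} applied to $\PW_n\in\SPS(k,m,Q(p(n)))$, the number of real roots of $\PW_n$ is bounded by
\[
C\cdot\left[e\cdot\left(1+\frac{Q(p(n))^m}{2^{k-1}-1}\right)\right]^{2^{k-1}-1},
\]
which is polynomial in $n$ because $k$, $m$ and $C$ are fixed while $Q(p(n))$ is polynomial in $n$. On the other hand $\PW_n$ has $2^n$ distinct real roots by definition, which exceeds any polynomial in $n$ for $n$ large enough. This contradicts the assumption $(\PER_n)\in\mSPS(k,m)$.

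The main obstacle I anticipate is handling B\"urgisser's reduction carefully enough to ensure that matrix entries are genuinely of the form ``integer or $X$'', so that sparsity is preserved exactly; if the reduction produced entries that were, say, polynomials of the form $X-c$, the substitution could in principle blow up the sparsity of each $\tilde f_j$ by a factor depending on the individual exponents $e_\ell$. A secondary concern is to verify that B\"urgisser's theorem is applicable under the constant-free circuit complexity hypothesis in item (iv) of Definition~\ref{mSPS}, which matches exactly the constant-free setting in which B\"urgisser's lower bound machinery operates; once the reduction and its effect on sparsity are set up correctly, combining it with Theorem~\ref{thm:support} is immediate.
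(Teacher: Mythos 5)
Your overall strategy mirrors the paper's: assume $(\PER_n)\in\mSPS(k,m)$, transfer the efficient $\mSPS$ representation to a univariate $\SPS(k,m,t)$ representation of the Pochhammer--Wilkinson polynomial via B\"urgisser's machinery, and contradict the polynomial real-root bound since $\PW_n$ has $2^n$ real roots. That is exactly the right plan, and the sparsity-preservation argument you give for the substitution step is the right kind of observation.

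However, your formulation of B\"urgisser's reduction is incorrect, and the error is fatal as stated. You posit a $p(n)\times p(n)$ matrix $M_n(X)$ whose entries are either integer constants or the indeterminate $X$, with $\PW_n(X)=\PER_{p(n)}(M_n(X))$. But the permanent of such a matrix has degree at most $p(n)$ in $X$, which is polynomial in $n$, whereas $\deg\PW_n=2^n$. No such $M_n$ can exist. The actual reduction (as the paper implements it via Theorems 4.1 and 1.2 of \cite{burgisser2009defining}) produces an intermediate $\VNP$ family $G_n(X_0,\ldots,X_n)$ with $\PW_n(X)=G_n(X^{2^0},X^{2^1},\ldots,X^{2^n})$, and then uses $\VNP$-completeness of the permanent to write $G_n$ as a projection $\PER_{h(n)}(z_1,\ldots,z_{h(n)^2})$ where each $z_i$ is a constant or one of the $X_j$. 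Chaining these, the entries plugged into $\PER_{h(n)}$ are constants or powers $X^{2^j}$ with $j\le n$; the exponential degree of $\PW_n$ comes entirely from these powers. The good news is that your sparsity argument survives this correction: a monomial $\prod_\ell X_\ell^{e_\ell}$ of $f_{j,h(n)}$ still collapses to a single monomial $cX^d$ when each $X_\ell$ is replaced by a constant or a power of $X$. Note also that your stated ``main obstacle'' (entries like $X-c$ blowing up sparsity) is not the issue that actually arises here; the real obstacle you needed to notice is the degree mismatch, and once you allow power-of-$X$ entries both the degree accounting and the sparsity preservation work out. Your secondary concern about the constant-free hypothesis is well placed and is exactly why the paper cites Theorems 1.2 and 4.1 rather than the GRH-conditional Corollary 4.2 of \cite{burgisser2009defining}.
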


\begin{proof}
Assume 
by contradiction
that $(\PER_n) \in \mSPS(k,m)$. 
By the previous remark, 
this implies that 
$\PER_n$ can be computed by polynomial
size constant free arithmetic circuits.
As in the proofs of Theorem $4.1$ and $1.2$ in \cite{burgisser2009defining},
it follows from this property 
that there is a family 
$(G_n(X_0,\ldots,X_n))$ 
in 
$\VNP$ 
such that 
\begin{equation} \label{PWtoG}
\PW_n(X)= G_{n}(X^{2^0},X^{2^1},\dots,X^{2^n}).
\end{equation}

Since the permanent is complete for $\VNP$, we have a polynomial $h$ such that 
\begin{equation} \label{GtoPER}
\PER_{h(n)}(z_1,\dots,z_{h(n)^2}) = G_n(X_0,\ldots,X_n) 
\end{equation}
where the $z_i$'s are either 
variables of $G_n$ or constants.
By hypothesis $(\PER_n) \in \mSPS(k,m)$. 
Let $Q$ be the corresponding polynomial from Definition~\ref{mSPS}.
From this definition and from~(\ref{PWtoG}) and~(\ref{GtoPER})
we have
$$\PW_n(X)= \displaystyle{ \sum_{i=1}^k \prod_{j=1}^{m} f_{jn}(X)^{\alpha_{ij}}}$$
where $f_{jn}(X)$ is 
$Q(h(n))$-sparse.
This shows that the polynomial $\PW_n$ is in $\SPS(k,m,R(n))$ 
where $R(n)=Q(h(n))$.

We have proved in Theorem~\ref{thm:naive} that polynomials in $\SPS(k,m,R(n))$
have at most  
$r(n)=C\times((m+2)R(n))^m)^{2^{k-1}-1}$ real roots. 
On the other hand, by construction the polynomial $\PW_n$ has $2^n$ roots, which is larger than $r(n)$ 
for all 
large enough  
$n$. 
This yields a contradiction and completes the proof of the theorem.
\qed\end{proof} 

\begin{remark}
It is possible to relax condition (iv) in Definition~\ref{mSPS}. 
We can replace it by the less restrictive condition:
\emph{\begin{itemize}
\item[(iv')] the polynomial $f_{jn}$ is $Q(n)$-sparse,
\end{itemize}}
i.e., we allow polynomials $f_{jn}$ with arbitrary complex coefficients.
Theorem~\ref{lb} still applies to this larger 
version of the class 
$\mSPS(k,m)$, but for the proof to go through we need to assume the 
Generalized Riemann Hypothesis.
The only change is at the beginning of the proof: Assuming 
that the permanent
family  belongs to the (redefined) class $\mSPS(k,m)$, 
we can conclude that this family 
can be computed by polynomial size arithmetic circuits 
with arbitrary constants. To see this, note
that any non-multilinear monomial in any $f_{jn}$ can be deleted since it
cannot contribute to the final result (the permanent is multilinear).
And since $f_{jn}$ is sparse, there is a polynomial size arithmetic circuit with
arbitrary constants to compute its multilinear monomials. 
The remainder of the proof is essentially unchanged. 
But to deal with arithmetic circuits with arbitrary constants (from the complex field) 
instead of constant-free arithmetic circuits, we shall use 
Corollary $4.2$ of~\cite{burgisser2009defining} instead of Theorems~1.2 and 4.1.
This means that we have to assume GRH as in this corollary.
It is an intriguing question whether this assumption can be removed from 
Corollary $4.2$ of~\cite{burgisser2009defining} and from this lower bound result.
\end{remark}

\section{Polynomial Identity Testing}\label{sec:PIT}

This section is devoted to 
a proof 
that 
Identity Testing can be done in deterministic polynomial
time on the polynomials studied in the previous sections.
Recall from Definition~\ref{def:phin} that for $\phi=\sum_{i=1}^k P_i\in\SPS(k,m,t)$, $(\phi_n)$ is 
defined by $\phi_n=\sum_{i=n}^k g_i^{(n)}P_i$.
 
\begin{lemma}\label{lemma:nullphi}
Let $\phi\in\SPS(k,m,t)$ and $(\phi_n)$ 
as in Definition~\ref{def:phin}.
Then for $l<k$, 
$\phi_l\equiv 0$ 
if and only if 
$\phi_{l+1}\equiv 0$ 
and $\phi_l$ has a smaller degree than $g_l^{(l)}P_l$. 
\end{lemma}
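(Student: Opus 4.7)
The plan is to prove both directions around the identity $\phi_{l+1}=g_l^{(l)}T_l\pi(\phi_l/T_l)'$ established in Lemma~\ref{tildeIsSps} (applied with $l$ in place of $1$). Throughout we may assume $g_l^{(l)}\not\equiv 0$, since otherwise $\phi_{l+1}=\phi_l$ by the construction in Lemma~\ref{tildeIsSps} and $T_l=g_l^{(l)}P_l\equiv 0$, making the statement immediate.

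For the forward direction, assume $\phi_l\equiv 0$. Then plugging $\phi_l=0$ into the defining formula for $\tilde\phi_l$ immediately yields $\phi_{l+1}\equiv 0$. Moreover $\deg(\phi_l)=-\infty$ is strictly less than $\deg(g_l^{(l)}P_l)$, since $P_l=\prod_j f_j^{\alpha_{lj}}$ and the $f_j$'s are non-zero by the definition of $\SPS(k,m,t)$, so $g_l^{(l)}P_l$ is a genuine non-zero polynomial. This gives both conditions.

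For the backward direction, assume $\phi_{l+1}\equiv 0$ and $\deg(\phi_l)<\deg(g_l^{(l)}P_l)$. Since $g_l^{(l)}$, $T_l=g_l^{(l)}P_l$, and $\pi=\prod_j f_j$ are all non-zero elements of $\RR[X]$, the equality $\phi_{l+1}=g_l^{(l)}T_l\pi\cdot(\phi_l/T_l)'=0$ inside the field $\RR(X)$ forces $(\phi_l/T_l)'=0$. Over a characteristic-zero field, a rational function with vanishing derivative is a constant, so $\phi_l/T_l=c$ for some $c\in\RR$. Equivalently, $\phi_l=c\cdot g_l^{(l)}P_l$. If we had $c\neq 0$, then $\deg(\phi_l)=\deg(g_l^{(l)}P_l)$, contradicting the degree hypothesis. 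Hence $c=0$ and $\phi_l\equiv 0$.

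The only subtle point, and the one that deserves a sentence of justification in the write-up, is the step from $(\phi_l/T_l)'=0$ to $\phi_l/T_l$ being a constant, which rests on characteristic zero; otherwise the argument is purely formal manipulation of the identity defining $\phi_{l+1}$ together with a degree comparison.
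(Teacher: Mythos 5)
Your proof takes essentially the same route as the paper: the forward direction is immediate, and the backward direction deduces $(\phi_l/T_l)'\equiv 0$ from the factorization $\phi_{l+1}=g_l^{(l)}T_l\pi(\phi_l/T_l)'$, concludes $\phi_l=c\,T_l$ for a real constant $c$, and uses the degree hypothesis to force $c=0$. The only divergence is in the degenerate case: the paper notes that if some $g_i^{(l)}\not\equiv 0$ one may \emph{reindex} the terms so the nonzero one is used as $T_l$, whereas you only dismiss the case $g_l^{(l)}\equiv 0$ as ``immediate''; that claim doesn't quite hold when $g_l^{(l)}\equiv 0$ and $\phi_l\equiv 0$ simultaneously, since then neither side of the degree inequality is defined (both are $-\infty$), so the forward implication fails as literally stated. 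The reindexing convention (also invoked explicitly in the paper's application in Theorem~\ref{thm:pit}) is what makes the degenerate case genuinely harmless, and is worth stating.
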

\begin{proof}
If for all $i$, $g_i^{(l)}$ is identically zero,
then the lemma holds. If there is at least one which is not identically zero,
assume that it is $g_l^{(l)}$ up to a reindexing of the terms.

Let $T_l=g_l^{(l)}P_l$, recall that $\phi_{l+1}= g_lT_l\pi(\phi_l/T_l)'$. If 
$\phi_l \equiv 0$, then $\phi_{l+1} \equiv 0$. Moreover, 
we have assumed that $T_l\not\equiv 0$ and it it thus of larger
degree than $\phi_l$ which is identically $0$.

Assume now that $\phi_{l+1}\equiv 0$, that is $g_lT_l\pi(\phi_l/T_l)' \equiv 0$.
By hypothesis,  $T_l$  and $\pi$ are not 
identically zero,
therefore $(\phi_l/T_l)' \equiv 0$. Thus there is $\lambda \in \RR$ such that 
$\phi_l = \lambda T_l$. Since by hypothesis $\phi_l$ and $T_l$ have different degrees,
 $\lambda = 0$ and $\phi_l \equiv 0$.
\qed\end{proof} 

To solve PIT, we will need to explicitly compute the sequence of polynomials $\phi_l$.
Thus, the algorithm is not 
black-box:
it must have  access to a representation of the input polynomial
under form~\eqref{expression}.

\begin{theorem}\label{thm:pit}
Let $k$ and $m$ be two integers and 
$\phi \in 
\SPS(k,m,t)$:
we have
$\phi=\sum_{i=1}^k\prod_{j=1}^m f_j^{\alpha_{ij}}$ where for all $i$ and $j$, $f_j$ is $t$-sparse and $\alpha_{ij}\ge 0$.
Then one can test if $\phi$
is identically zero in time polynomial 
in $t$, in the size of the sparse representation of the $f_j$'s and in the $\alpha_{ij}$'s.
\end{theorem}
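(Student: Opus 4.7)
The plan is to apply Lemma~\ref{lemma:nullphi} recursively: testing whether $\phi = \phi_1$ is identically zero reduces to testing whether $\phi_k \equiv 0$ together with $k-1$ auxiliary checks, one at each level $l = 1, \dots, k-1$. Since $k$ is a constant, a bounded number of such checks is needed, and each will be carried out in polynomial time using explicitly computed sparse representations of the polynomials $g_i^{(n)}$.

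First I would precompute the sparse representations of $g_i^{(n)}$ for all $1 \le n \le i \le k$, using the recurrence from the proof of Lemma~\ref{tildeIsSps}. By Proposition~\ref{prop:Boundh}, each $g_i^{(n)}$ has at most $h_n \le ((m+2)t^m)^{2^{n-1}-1}$ monomials, which is polynomial in $t$ when $k$ and $m$ are fixed; the recurrence is a polynomial-time operation on such sparse polynomials (products with the $f_j$, with the $g$'s, and with derivatives of $f_j$, plus sums and scalings by $\alpha_{ij}-\alpha_{lj}$), so the whole table is built in time polynomial in $t$, the sparse size of the $f_j$'s, and the $\alpha_{ij}$'s.

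The base case $l = k$ is immediate: $\phi_k = g_k^{(k)} P_k$, and since $P_k$ is a product of nonzero $f_j$'s we have $\phi_k \equiv 0$ iff $g_k^{(k)} \equiv 0$, which is readable from the stored sparse representation. For $l$ going from $k-1$ down to $1$, if all $g_i^{(l)}$ are identically zero then $\phi_l \equiv 0$; otherwise reindex so that $g_l^{(l)} \not\equiv 0$. By Lemma~\ref{lemma:nullphi}, $\phi_l \equiv 0$ iff $\phi_{l+1} \equiv 0$ and $\deg \phi_l < \deg T_l$. The first condition is already known from the preceding step; for the second, the proof of Lemma~\ref{lemma:nullphi} shows that under $\phi_{l+1} \equiv 0$ we have $\phi_l = \lambda T_l$ for some $\lambda \in \RR$, so the degree condition is equivalent to $\lambda = 0$. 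I would therefore pick an integer $x_0$ with $T_l(x_0) \neq 0$ — feasible because $T_l = g_l^{(l)} \prod_j f_j^{\alpha_{lj}}$ has at most $(2h_l-1)+m(2t-1)$ distinct real roots by Descartes' rule, so the first such integer in $\{1,2,\dots,N\}$ for a slightly larger $N$ works — and then evaluate
\[
\phi_l(x_0) = \sum_{i = l}^{k} g_i^{(l)}(x_0) \prod_{j=1}^m f_j(x_0)^{\alpha_{ij}},
\]
which is cheap: each $f_j(x_0)$ and each $g_i^{(l)}(x_0)$ is obtained from its sparse representation, and $f_j(x_0)^{\alpha_{ij}}$ is computed by (fast) exponentiation. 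Then $\phi_l \equiv 0$ iff $\phi_l(x_0) = 0$.

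The main obstacle is really bookkeeping: one must check that storing and manipulating the $g_i^{(n)}$ (and evaluating them at a small integer) stays within polynomial time in the claimed parameters. Polynomiality of the sparsity is handed to us by Proposition~\ref{prop:Boundh}, but we must also argue that coefficient bitsizes and the value $T_l(x_0)$ used to find a good $x_0$ do not blow up — this follows from the freedom to pick $x_0$ in a small integer set of size polynomial in $h_l$, $m$, $t$, and from counting arithmetic operations. Once this is set up, the correctness of the algorithm is exactly Lemma~\ref{lemma:nullphi} applied level by level.
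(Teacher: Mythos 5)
Your proof is correct and follows essentially the same plan as the paper's: precompute the sparse representations of the $g_i^{(n)}$ (with sparsity controlled via Proposition~\ref{prop:Boundh}), then apply Lemma~\ref{lemma:nullphi} level by level starting from the base case $\phi_k = g_k^{(k)}P_k$. The only difference is in how the degree condition of Lemma~\ref{lemma:nullphi} is checked — the paper computes the highest-degree monomial of each $g_i^{(l)}P_i$, whereas you use the fact that $\phi_l=\lambda T_l$ when $\phi_{l+1}\equiv 0$ and test $\lambda=0$ by evaluating $\phi_l$ at an integer $x_0$ with $T_l(x_0)\neq 0$ — a minor implementation variant with the same (polynomial in the values of the $\alpha_{ij}$) running time.
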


\begin{proof}
 Let $(\phi_n)$ 
 be the sequence defined from $\phi$ as in Definition~\ref{def:phin}. 
 Lemma~\ref{lemma:nullphi} 
 implies that $\phi$ is identically zero 
if and only if $\phi_k$ is identically zero and that for all $l < k$,
$\displaystyle{\phi_l = \sum_{i=l}^k g_i^{(l)} P_i}$ has a strictly smaller degree than $g_l^{(l)}P_l$. 
We also assume that $g_l^{(l)}P_l$ is of highest degree amongst the $g_i^{(l)}P_i$
 (always true up to a reordering of these terms).

One can compute the sparse polynomials $g_i^{(l)}$, for all $i$ and $l$ in time polynomial in the size 
of the $f_j$'s if $k$ and $m$ are fixed. 
For each $l$, one can test if the degree of $g_l^{(l)}P_l$ and of $\phi_l$ differ.
One only has to compute the highest degree monomials of each $g_i^{(l)}P_i$ for $i \geq l$.
One can do that in time polynomial in the $\alpha_{ij}$ (not their bitsize) and the size of the $f_j$'s.

Finally, $\phi_k = g_k^{(k)} P_k$ therefore it is identically zero if and only if $g_k^{(k)}$
is identically zero and we have computed it explicitly.
\qed\end{proof}

This algorithm is polynomial in the $\alpha_{ij}$'s,
though ideally we would like it to be polynomial in their bitsize.
\begin{proposition}
Assume that we have access to an oracle which decides whether 
\begin{equation} \label{oracle}
\sum_{i=1}^k \prod_{j=1}^m a_{ij}^{\alpha_{ij}}=0.
\end{equation}
Let $\phi=\sum_{i=1}^k\prod_{j=1}^m f_j^{\alpha_{ij}}$ as in Theorem~\ref{thm:pit}. 
Then one can decide 
deterministically
whether $\phi$ is identically zero in time polynomial in the sparsity of the $f_j$'s and in the bitsize
of the $a_{ij}$'s and $\alpha_{ij}$'s.
\end{proposition}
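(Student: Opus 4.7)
The overall plan is to re-run the algorithm of Theorem~\ref{thm:pit}, but to replace the single step that runs in time polynomial in the $\alpha_{ij}$ rather than in their bitsize by a call to the given oracle. In the proof of Theorem~\ref{thm:pit}, that bottleneck is the explicit computation of the highest-degree monomial of each $g_i^{(l)}P_i$: its coefficient equals $\text{lc}(g_i^{(l)})\prod_{j=1}^m \text{lc}(f_j)^{\alpha_{ij}}$, whose bitsize is linear in $\alpha_{ij}$ and therefore exponential in the bitsize of $\alpha_{ij}$. All other steps are already polynomial in bitsize.

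First I would symbolically compute the polynomials $g_i^{(l)}$ for $1\le l\le i\le k$ via the recursion of Lemma~\ref{tildeIsSps}. Each step applies only differences of the $\alpha_{ij}$'s and standard polynomial operations on the coefficients of the $f_j$'s, so the numerical coefficients of the $g_i^{(l)}$ remain of polynomial bitsize, while their sparsity stays bounded by $h_l=\mathrm{poly}(t)$ by Proposition~\ref{prop:Boundh}. In particular $g_k^{(k)}$ is computed explicitly and, since $P_k\not\equiv 0$, the base case ``$\phi_k\equiv 0$'' reduces to inspecting its polynomially many coefficients.

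Next, by Lemma~\ref{lemma:nullphi}, it suffices, for each $l=k-1,\dots,1$, to decide whether $\deg\phi_l<\deg(g_l^{(l)}P_l)$. The integer degrees $D_i=\deg g_i^{(l)}+\sum_j\alpha_{ij}\deg f_j$ all have polynomial bitsize, so one can reorder to make $D_l$ maximal and isolate the set $I=\{i\ge l:D_i=D_l\}$ of terms contributing to $X^{D_l}$ in $\phi_l$. If $|I|=1$ the leading coefficient is visibly non-zero; otherwise the coefficient of $X^{D_l}$ in $\phi_l$ equals
\[
\sum_{i\in I}\text{lc}(g_i^{(l)})\prod_{j=1}^m \text{lc}(f_j)^{\alpha_{ij}},
\]
which fits the input shape of the oracle once the scalar $\text{lc}(g_i^{(l)})$ is regarded as an additional factor with exponent~$1$. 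A single oracle call per $l$ then decides each degree comparison, and we conclude $\phi\equiv 0$ iff every such check succeeds and $g_k^{(k)}\equiv 0$.

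The main obstacle is bookkeeping: one must check that every value handed to the oracle has polynomial bitsize (which follows because the $\text{lc}(g_i^{(l)})$ were produced in the first step as polynomial-degree expressions in the inputs, and the $\text{lc}(f_j)$ come straight from the input), and one must fit the $(m+1)$-factor products into the exact $\sum\prod a^{\alpha}$ template expected by the oracle---either by appealing to the natural extension of the oracle to a slightly larger second parameter or by folding $\text{lc}(g_i^{(l)})$ into one of the factors. Granting this, the algorithm performs a polynomial number of arithmetic operations on polynomially large integers plus a constant number of oracle calls, yielding the claimed complexity bound.
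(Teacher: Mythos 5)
Your proposal is correct and follows the same route as the paper: the paper's proof observes (in two sentences) that the only step of Theorem~\ref{thm:pit} whose running time depends on the $\alpha_{ij}$ rather than their bitsize is the explicit evaluation of the leading coefficients of the $g_i^{(l)}P_i$, and that this step is exactly of the form~\eqref{oracle} and can thus be delegated to the oracle. Your write-up fills in the same argument in more detail---tracking the bitsizes of the intermediate coefficients, isolating the degree-tie set $I$, and folding $\mathrm{lc}(g_i^{(l)})$ into the product as an extra factor with exponent $1$---all of which is consistent with the paper's intent.
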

\begin{proof}
The only dependency in the $\alpha_{ij}$'s in the proof of Theorem~\ref{thm:pit}
is the computation of the coefficient of the highest degree monomials of the $g_i^{(l)}P_i$.
With the oracle for~\eqref{oracle}, we skip this step and achieve a polynomial dependency in the 
bitsize of the $\alpha_{ij}$'s. \qed
 
\end{proof}

A direct computation of the constant on the left-hand side of~\eqref{oracle}
is not possible since it involves
numbers of exponential bitsize (the exponents $\alpha_{ij}$ are given in
binary notation).
The test to 0 can be made by computing modulo random primes, but this is ruled
out since we want a deterministic algorithm.
Note also that 
this test
is a PIT problem for polynomials in 
$\SPS(k,m,t)$ where the $f_j$'s are constant polynomials.
For general arithmetic circuits, it is likewise known that PIT reduces
to the case of circuits
without any 
variable occurrence~(\cite{ABKB09}, Proposition~2.2).

The polynomial identity test from Theorem~\ref{thm:pit}
can also be applied to the class of multivariate 
polynomial families $\mSPS(k,m)$ introduced in the previous section.
Indeed, let $P(X_1,\dots,X_n)=\sum_i\prod_j f_j^{\alpha_{ij}}$ belongs to some 
$\mSPS(k,m)$ family, and suppose we know a bound $d$ on its degree.
We turn $P$ into a univariate polynomial $Q$ by the
classical substitution (sometimes attributed to Kronecker)
$ X_i \mapsto X^{{(d+1)}^i}$.
We write $Q(X)=\sum_i\prod_j g_j^{\alpha_{ij}}$,
where each univariate polynomial $g_j$ is the image of $f_j$ by the substitution.
It is a folklore result that $P\equiv 0$
if and only if $Q\equiv 0$, thus we can apply the PIT algorithm of Theorem~\ref{thm:pit} on $Q$.

Let $s$ be  the size of the representation of $P$, 
meaning that $P$ depends on at most $s$ variables, the $f_j$'s have a 
constant free circuit of size at most $s$ and are $s$-sparse, and the 
$\alpha_{ij}$ are at most equal to $s$. (Note that we do not bound their 
bitsizes but their values as it is needed for our PIT algorithm.)  Then the degree 
of the $f_j$'s is at most $2^s$, and  $d\le 2^{\poly(s)}$ where $\poly(s)$ denotes some
polynomial function of $s$. The $g_j$'s therefore have a degree at most $2^{s\poly(s)}\times 2^s
=2^{s\poly(s)+s}$. This proves that $Q$ satisfies the hypothesis of Theorem~\ref{thm:pit}.

\section{Conclusion}

We have shown that the real $\tau$-conjecture from~\cite{koiran2011shallow} 
holds true for a restricted class of polynomials, and from this result we have
obtained an identity testing algorithm and a lower bound for the permanent.
Other simple cases of the conjecture remain open. 
In the general case, we can expand a sum of product of sparse polynomials
as a sum of at most $kt^m$ monomials. There are therefore at most
$2kt^m-1$ real roots. As pointed out in~\cite{koiran2011shallow},
 the case $k=2$ is already open: is there a polynomial bound on the number
of real roots in this case?
Even simpler versions of this question are open. For instance, we can
ask whether the number of real roots of an expression of the form
$f_1\cdots f_m +1$ is polynomial in $m$ and $t$. A bare bones version
of this problem was pointed out by Arkadev Chattopadhyay 
(personal communication): taking $m=2$, we can
ask what is the maximum number of real roots of an expression of the
form $f_1f_2+1$. Expansion as a sum of monomials yields a $O(t^2)$ 
upper bound, 
but for all we know the true bound could be $O(t)$.

\end{document}